\begin{document}

   \setlength\abovedisplayskip{0pt}
   \setlength\belowdisplayskip{0pt}

\title{Cost Sharing Security Information with Minimal Release Delay}
\author{}
\institute{}
\author{Mingyu Guo \and
Yong Yang \and
Muhammad Ali Babar
}
\authorrunning{M. Guo et al.}
\institute{The University of Adelaide, Adelaide, Australia
\email{\{mingyu.guo,ali.babar\}@adelaide.edu.au}\\
\email{yong.yang@student.adelaide.edu.au}}
\maketitle

\begin{abstract} We study a cost sharing problem derived from bug bounty
    programs, where agents gain utility by the amount of time they get to enjoy
    the cost shared information. Once the information is provided to an agent, it
    cannot be retracted.  The goal, instead of maximizing revenue, is to pick a
    time as early as possible, so that enough agents are willing to cost share
    the information and enjoy it for a premium time period, while other agents wait
    and enjoy the information for free after a certain amount of release delay. We
    design a series of mechanisms with the goal of minimizing the maximum delay
    and the total delay. Under prior-free settings, our final mechanism
    achieves a competitive ratio of $4$ in terms of maximum delay, against an
    undominated mechanism. Finally, we assume some distributions of the agents'
    valuations, and investigate our mechanism's performance in terms of
expected delays.  \keywords{Mechanism design  \and Cost sharing \and Bug
bounty.} \end{abstract}

\section{Introduction}
The market for software vulnerabilities---also known as bugs---is a crowded
one. For those holding a serious bug to sell, there are many kinds of
interested customers: the software vendors themselves that can produce official
patches, the anonymous buyers in the black markets that boast greater
reward~\cite{Algarni2014:Software}, and many others in between---such as the
vulnerability brokers.

As defined by B\"ohme~\cite{Boehme2006:Comparison}, by vulnerability brokers,
we refer to organizations other than software vendors that purchase
vulnerabilities and produce corresponding defense services (such as intrusion
detection systems~\cite{Howard2009:Cyber}) for their subscribers. Bug bounty
programs offered by vulnerability brokers provide greater financial incentives
for vulnerability sellers, as their customers could include large corporations
and government agencies that have huge budgets for
security~\cite{Howard2009:Cyber}. One common problem these programs have is
that their subscribers are usually charged an annual subscription
fee~\cite{Boehme2006:Comparison}, while they certainly don't produce a constant
number of security updates every year, and each customer may not benefit
equally with each update---for example, an update that helps prevent a bug in
Windows operating system would be of little interest to customers that don't
use Windows at all, though they still have to pay for the update with the fixed
subscription fee.

While this inequality can be trivially solved by designing as many subscription
levels as necessary, we are introducing a game-theoretical model for non-profit
bug bounty programs that both solves this efficiency problem and promotes
general software security.

Specifically, we study the mechanism design problem of selling one bug
(information regarding it) to multiple agents. The goal is not to make a
profit, but we need the mechanism to cover the cost of the bug. All agents
receive the bug if enough payments can be collected to cover the cost. To
incentivize payments, agents who do not pay receive the bug slightly delayed.
Our goal is to maximize the social welfare by minimizing the maximum and the
total delay. We end up with a mechanism that is 4-competitive against an
undominated mechanism in terms of maximum delay, and for expected delay, we conclude by discussing the expected delay under some assumptions on the distributions of the agents' valuations.

Although this problem we are studying is derived from bug bounty programs, it
certainly could relate to other systems. So here we define the traits that
characterize the problem. The service or good that is sold has unlimited supply
once funded (zero marginal cost), and cannot be retracted once given to a user.
The most common examples are information and digital goods. The agents have a
valuation function that is non-decreasing in terms of time: the earlier the
agent gets the information, the more utility she receives. And as we are
designing non-profit systems, the mechanism should be budget balanced: we
charge the agents exactly the amount needed to purchase the bug which the
defense information is derived from. Finally, we want to incentivize enough
payments with long premium time periods (periods exclusively enjoyed by the
paying agents). But we also want the premium time periods to be as short as
possible so that non-paying agents can receive the information sooner, as it
leads to higher social welfare.

\section{Related Research}

With more and more critical software vulnerabilities catching the public's
attention, there's an increasing amount of literature on the market for
vulnerabilities.  However, we failed to identify any research that shares the
same problem structure or the same goal as ours, so the following work is
mostly on understanding the vulnerability market, and inspirations for future
work, rather than what our study is based on.

Regarding the vendor's possible reluctance to accept and fix reported
vulnerabilities responsibly, Canfield et al.~\cite{Canfield:National} made
quite a few recommendations on ways to incentivize vendors to fix the
software's vulnerabilities responsibly, and general improvement suggestions
including allowing negotiations of the severity level of discovered
vulnerabilities; on the subject of how and when should bugs be disclosed to the
general public.  Arora et al.~\cite{Arora2008:Optimal} produced numerical
simulations which suggested instant disclosure of vulnerabilities to be
sub-optimal. Nizovtsev and Thursby~\cite{Nizovtsev2007:disclose}, unlike
others, used a game-theoretic approach to show that full disclosure can be an
equilibrium strategy, and commented on the pressure of instant disclosure may
put on vendors may have a long-term effect that improves software quality.
Also, there had been discussions on the feasibility of introducing markets for
trading bugs openly~\cite{Miller2007:legitimate}, with some going as far as
designing revenue-maximizing mechanisms for
them~\cite{Guo2017:Optimizing,Guo2016:Revenue}.

Then, when introducing new bug bounty programs, it's quite necessary to
consider its effect outside the expected producer and consumer population.
Maillart et al.~\cite{Maillart2017:Given} proved that each newly launched program has
a negative effect on submissions to existing bug bounty programs, and they also
analyzed the researchers (bounty hunters) expected gains and strategies in
participating in bug bounty programs. Specifically for vulnerability
brokers, Kannan et al.~\cite{Kannan2005:Market} emphasized a caveat that a
vulnerability broker (which is called a market-based infomediary in their
paper) always has incentive to leak the actual vulnerability, as ``\ldots This
leakage exposes nonsubscribers to attacks from the hacker. The leakage also
serves to increase the users’ incentives to subscribe to the infomediary's
service.''

Finally, we found a sorely lacking amount of literature on existing
vulnerability brokers and the actual sellers of vulnerabilities. Although a
few papers on these topics were located~\cite{Howard2009:Cyber,Guo2017:Optimizing,Guo2016:Revenue}, we
did not find any detailed models or discussions, perhaps due to the secretive
nature of the cybersecurity business.

\section{Model Description}\label{sec:model}

We study the problem of selling one bug (with a fixed cost) to $n$ agents.  Our
goal is \emph{not} to make a profit, but we need the mechanism to cover the
cost of the bug.  Without loss of generality, we assume the cost of the bug is
$1$.

Our mechanism would generally charge a total payment of $1$ from the agents (or
charge $0$, in which case the bug is not sold). If the bug is sold, then we
provide the bug to \emph{all} agents, including those who pay very little
or do not pay at all. There are a few reasons for this design decision:

\begin{itemize}

    \item The main goal of this non-profit system is to promote general
        software security, so we would like to have as many people protected
        from the vulnerability as possible.

    \item Since no cost is incurred in distributing the bug once funded,
        the system and the agents don't lose anything by allowing the presence
        of free riders.

    \item In practise, providing free security information encourages more
        agents to join the system. Under our cost sharing mechanism, including
        more agents actually generally leads to less individual payment and
        increased utilities for everyone.

\end{itemize}

To incentivize payments, if an agent has a higher valuation (is willing to pay
more), then our mechanism provides the bug information to this agent slightly
earlier.  For the free riders, they receive the bug for free, except that there
is a bit of delay. Our aim is to minimize the delay (we cannot completely get
rid of the delay as it is needed for collecting payments).

We assume the bug has a life cycle of $[0, 1]$. Time $0$ is when the sale
starts. Time $1$ is when the bug reaches its end of life cycle (or when the bug
becomes public knowledge).

We use $v_i$ to denote agent $i$'s type. If agent $i$ receives the bug at time
$t$, then her valuation equals $(1-t)v_i$. That is, if she receives the bug at
time $0$, then her valuation is simply $v_i$.  If she receives the bug at time
$1$, then her valuation is $0$.

We use $t_i^M(v_i,v_{-i})$ and $p_i^M(v_i,v_{-i})$ to denote agent $i$'s
allocation time and payment, under mechanism $M$, when agent $i$ reports $v_i$
and the other agents report $v_{-i}$.\footnote{For randomized mechanisms, the
allocation times and payments are the expected values over the random bits.}
Agent $i$'s utility $u_i^M(v_i,v_{-i})$ is
$(1-t_i^M(v_i,v_{-i}))v_{i}-p_i^M(v_i,v_{-i})$.

We enforce three mechanism constraints in this paper:
\emph{strategy-proofness}, \emph{individual rationality}, and \emph{ex post
budget balance}.  They are formally defined as follows:

\begin{itemize}

    \item Strategy-proofness: for any $v_i, v_i', v_{-i}$, \[
            (1-t_i^M(v_i,v_{-i}))v_i-p_i^M(v_i,v_{-i})\ge
            (1-t_i^M(v_i',v_{-i}))v_i-p_i^M(v_i',v_{-i}) \]

    \item Individual Rationality: for any $v_i, v_{-i}$,
        \[ (1-t_i^M(v_i,v_{-i}))v_i-p_i^M(v_i,v_{-i})\ge 0\]

    \item Ex post budget balance\footnote{For randomized mechanisms, we require
        that for all realizations of the random bits, the constraint holds.}:

        \begin{itemize}

            \item[] If the bug is sold, then we must have
                \[\sum_{i}p_i^M(v_i,v_{-i})=1\]

            \item[] If the bug is not sold, then we must have that for all $i$
                \[p_i^M(v_i,v_{-i})=0\ \textnormal{and}\ t_i^M(v_i,v_{-i})=1\]

        \end{itemize}

\end{itemize}

We study the minimization of two different mechanism design objectives.
The \textsc{Max-Delay} and \textsc{Sum-Delay} are defined as follows:
\begin{align*}
    \textsc{Max-Delay:}\ \max_{i}t_i^M(v_i,v_{-i})\\
    \textsc{Sum-Delay:}\ \sum_{i}t_i^M(v_i,v_{-i})
\end{align*}

Our setting is a single-parameter setting where Myerson's characterization applies.

\begin{claim}[Myerson's Characterization~\cite{Myerson1981:Optimal}] Let $M$ be a strategy-proof and
    individually rational mechanism, we must have that

    \begin{itemize}

        \item For any $i$ and $v_{-i}$, $t_i^M(v_i,v_{-i})$ is non-increasing
            in $v_i$.  That is, by reporting higher, an agent's allocation time
            never becomes later.

        \item The agents' payments are completely characterized by the
            allocation times. That is, $p_i^M$ is determined by $t_i^M$.
            \[p_i^M(v_i, v_{-i}) =
            v_i(1-t_i^M(v_i,v_{-i}))-\int_{z=0}^{v_i}(1-t_i^M(z,
            v_{-i}))\,\mathrm{d}z \] The above payment characterization also
            implies that both the payment $p_i^M(v_i,v_{-i})$ and the utility
            $u_i^M(v_i,v_{-i})$ are non-decreasing in $v_i$.

    \end{itemize}

\end{claim}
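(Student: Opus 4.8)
The plan is to prove the two bullets separately, using only the strategy-proofness (SP) and individual rationality (IR) constraints already stated, following the classical Myerson argument specialized to this time-valuation setting. Throughout, fix an agent $i$ and the reports $v_{-i}$, and abbreviate $t(v_i):=t_i^M(v_i,v_{-i})$, $p(v_i):=p_i^M(v_i,v_{-i})$, and $u(v_i):=u_i^M(v_i,v_{-i})=(1-t(v_i))v_i-p(v_i)$.

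For monotonicity, take two types $v_i>v_i'$ and write SP twice: once saying type $v_i$ does not want to report $v_i'$, once saying type $v_i'$ does not want to report $v_i$. Adding the two inequalities cancels the payment terms and leaves $(1-t(v_i))(v_i-v_i')\ge(1-t(v_i'))(v_i-v_i')$; dividing by the positive quantity $v_i-v_i'$ gives $t(v_i)\le t(v_i')$, which is the claimed non-increasingness of the allocation time in the report.

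For the payment characterization, reuse the same two SP inequalities but isolate $u(v_i)-u(v_i')$ instead, obtaining the sandwich $(1-t(v_i'))(v_i-v_i')\le u(v_i)-u(v_i')\le(1-t(v_i))(v_i-v_i')$ for all $v_i>v_i'$. Since allocation times lie in $[0,1]$, the multiplier $1-t(\cdot)$ lies in $[0,1]$, so $u$ is $1$-Lipschitz (hence absolutely continuous) and non-decreasing; moreover $1-t(\cdot)$ is itself non-decreasing by the first bullet, so it has at most countably many discontinuities, and dividing the sandwich by $v_i-v_i'$ and letting $v_i'\to v_i$ shows $u'(v_i)=1-t(v_i)$ at every continuity point, hence almost everywhere. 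Integrating, $u(v_i)=u(0)+\int_{z=0}^{v_i}(1-t(z))\,\mathrm{d}z$. Finally $u(0)=(1-t(0))\cdot 0-p(0)=-p(0)$, and IR at $v_i=0$ forces $u(0)\ge 0$; since the mechanisms we consider never subsidize a reported type of $0$ (a pure free rider pays nothing), $p(0)=0$ and $u(0)=0$, so substituting back into $p(v_i)=(1-t(v_i))v_i-u(v_i)$ yields exactly the stated formula. The closing remark of the claim then follows: $u$ is non-decreasing as shown above, and $p$ is non-decreasing because the upper half of the sandwich gives $p(v_i)-p(v_i')\ge v_i'\,(t(v_i')-t(v_i))\ge 0$.

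The only genuinely delicate step is the passage from the incremental sandwich to the integral identity — justifying absolute continuity and that the almost-everywhere derivative equals $1-t(\cdot)$ — together with pinning down the integration constant; everything else is elementary algebra from the SP inequalities. I would be careful to flag the caveat on the boundary term: with only SP, IR, and budget balance, the formula holds up to the additive constant $-p_i^M(0,v_{-i})$, which is non-positive by IR and which we take to be $0$ under the no-subsidy convention built into our mechanisms.
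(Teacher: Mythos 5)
Your proof is correct. The paper offers no proof of this claim at all --- it is stated as a citation of Myerson's classical characterization --- so there is no in-paper argument to compare against; what you have written is the standard derivation (adding the two incentive inequalities to get monotonicity, then the envelope/sandwich argument, Lipschitz continuity of $u$, and the a.e.\ identity $u'=1-t$ via monotonicity of $t$), correctly specialized to the time-delay valuation $(1-t)v_i$. Your one substantive caveat is well taken and worth keeping: SP and IR alone pin down $p_i^M$ only up to the additive constant $-p_i^M(0,v_{-i})$, and IR merely forces $p_i^M(0,v_{-i})\le 0$; the displayed formula in the claim implicitly adopts the no-subsidy normalization $p_i^M(0,v_{-i})=0$. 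That normalization is consistent with every mechanism the paper actually constructs (zero-reporting agents always pay $0$), and it is the convention under which "payments are completely characterized by allocation times" is literally true, so flagging it rather than silently assuming it is the right call. The remaining steps --- $u$ non-decreasing from the lower half of the sandwich, and $p(v_i)-p(v_i')\ge v_i'\,(t(v_i')-t(v_i))\ge 0$ from the upper half --- are exactly right.
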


\section{Prior-Free Settings}

In this section, we focus on problem settings where we do not have the prior
distributions over the agents' types.  For both \textsc{Max-Delay} and
\textsc{Sum-Delay}, the notion of optimal mechanism is not well-defined.  Given
two mechanisms $A$ and $B$, mechanism $A$ may outperform mechanism $B$ under
some type profiles, and vice versa for some other type profiles.

We adopt the following dominance relationships for comparing mechanisms.

\begin{definition}

    Mechanism $A$ \textsc{Max-Delay-Dominates} mechanism $B$, if
    \begin{itemize}

        \item for \emph{every} type profile, the \textsc{Max-Delay} under
            mechanism $A$ is \emph{at
            most}\footnote{\label{footnote:tie-breaking}Tie-breaking detail:
            given a type profile, if under $A$, the bug is not sold (max delay
            is $1$), and under $B$, the bug is sold (the max delay happens to
            be also $1$), then we interpret that the max delay under $A$ is
            \emph{not} at most that under $B$.} that under mechanism $B$.

        \item for \emph{some} type profiles, the \textsc{Max-Delay} under
            mechanism $A$ is \emph{less than} that under mechanism $B$.

    \end{itemize}

    A mechanism is \textsc{Max-Delay-Undominated}, if it is not dominated by
    any \emph{strategy-proof} and \emph{individually rational} mechanisms.

\end{definition}

\begin{definition}

    Mechanism $A$ \textsc{Sum-Delay-Dominates} mechanism $B$, if

    \begin{itemize}

        \item for \emph{every} type profile, the \textsc{Sum-Delay} under
            mechanism $A$ is \emph{at most} that under mechanism $B$.

        \item for \emph{some} type profiles, the \textsc{Sum-Delay} under mechanism
            $A$ is \emph{less than} that under mechanism $B$.

    \end{itemize}

    A mechanism is \textsc{Sum-Delay-Undominated}, if it is not dominated by
    any \emph{strategy-proof} and \emph{individually rational} mechanisms.

\end{definition}

For our model, one trivial mechanism works as follows:

\vspace{.1in}
\noindent\framebox{\parbox{\textwidth}{%
    \begin{center}
    \textbf{Cost Sharing (CS)}
    \end{center}

    \flushright{%
    Strategy-proofness: Yes

    Individual rationality: Yes

    Ex post budget balance: Yes}

    \begin{itemize}

        \item Consider the following set: \[ K = \{ k\ \vert\ \textnormal{$k$
            values among the $v_i$ are at least $1/k$}, 1\le k \le n \} \]

        \item If $K$ is empty, then the bug is not sold. Every agent's
            allocation time is $1$ and pays $0$.

        \item If $K$ is not empty, then the highest $k^* = \max K$ agents
            each pays $1/k^*$ and receives the bug at time $0$. The other
            agents receive the bug at time $1$ and each pays $0$.

    \end{itemize}
}}
\vspace{.1in}

The above mechanism is strategy-proof, individually rational, and ex post
budget balanced.  Under the mechanism, $k^*$ agents join in the cost sharing
and their delays are $0$s, but the remaining agents all have the maximum delay
$1$. Both the \textsc{Max-Delay} and the \textsc{Sum-Delay} are bad when $k^*$
is small.  One natural improvement is as follows:

\vspace{.1in}
\noindent\framebox{\parbox{\textwidth}{%
    \begin{center}
    \textbf{Cost Sharing with Deadline (CSD)}
    \end{center}

    \flushright{%
    Strategy-proofness: Yes

    Individual rationality: Yes

    Ex post budget balance: No}

    \begin{itemize}

        \item Set a constant deadline of $0\le t_C \le 1$. Under the mechanism,
            an agent's allocation time is at most $t_C$.

        \item Consider the following set: \[ K = \{ k\ \vert\ \textnormal{$k$
            values among the $v_i$ are at least $\frac{1}{kt_C}$}, 1\le k \le n
            \} \]

        \item If $K$ is empty, then the bug is not sold. Every agent's
            allocation time is $t_C$ and pays $0$.

        \item If $K$ is not empty, then the highest $k^* = \max K$ agents
            each pays $1/k^*$ and receives the bug at time $0$. The other
            agents receive the bug at time $t_C$ and each pays $0$.

    \end{itemize}
}}
\vspace{.1in}

The idea essentially is that we run the trivial cost sharing (\textsc{CS}) mechanism
on the time interval $[0,t_C]$, and every agent receives the time interval $[t_C,1]$
\emph{for free}.   The mechanism remains strategy-proof and
individually rational.  Unfortunately, the mechanism is not ex post budget
balanced---even if the cost sharing failed (\emph{e.g.}, $K$ is empty), we
still need to reveal the bug to the agents at time $t_C$ for free. If $t_C<1$,
we have to pay the seller without collecting back any payments.

The reason we describe the \textsc{CSD} mechanism is because our final
mechanism uses it as a sub-component, and the way it is used fixes the budget
balance issue.








\begin{example}
    Let us consider the type profile $(0.9, 0.8, 0.26, 0.26)$. We run the cost sharing
    with deadline (\textsc{CSD}) mechanism using different $t_C$ values:

    \begin{itemize}

        \item If we set $t_C=0.9$, then agent $1$ and $2$ would receive the bug
            at time $0$ and each pays $0.5$. Agent $3$ and $4$ pay nothing but they have to wait until time $0.9$.

        \item If we set $t_C=0.7$, then agent $1$ and $2$ would still receive the bug
            at time $0$ and each pays $0.5$. Agent $3$ and $4$ pay nothing but they only
            need to wait until $0.7$. This is obviously better.

        \item If we set $t_C=0.5$, then all agents pay $0$ and only wait until
            $0.5$.  However, we run into budget issue in this scenario.

    \end{itemize}

We need $t_C$ to be small, in order to have shorter delays.  However, if $t_C$
    is too small, we have budget issues. The optimal $t_C$ value depends on the
    type profile.  For the above type profile, the optimal
    $t_C=\frac{0.5}{0.8}=0.625$.  When $t_C=0.625$, agent $2$ is still willing
    to pay $0.5$ for the time interval $[0,0.625]$ as $0.8\times 0.625=0.5$.
\end{example}

\begin{definition}

    Given a type profile $(v_1,v_2,\dots,v_n)$, consider the following set:

    \[ K(t_C) = \{ k\ \vert\ \textnormal{$k$ values among the $v_i$ are at least
    $\frac{1}{kt_C}$}, 1\le k \le n \} \]

    $t_C$ is between $0$ and $1$. As $t_C$ becomes smaller, the set $K(t_C)$
    also becomes smaller.  Let $t_C^*$ be the minimum value so that $K(t_C^*)$
    is not empty.  If such $t_C^*$ does not exist (\emph{e.g.}, $K(1)$ is
    empty), then we set $t_C^*=1$.

    $t_C^*$ is called the \textbf{optimal deadline} for this type profile.

\end{definition}

Instead of using a constant deadline, we may pick the optimal deadline for
every type profile.

\vspace{.1in}
\noindent\framebox{\parbox{\textwidth}{%
    \begin{center}
    \textbf{Cost Sharing with Optimal Deadline (CSOD)}
    \end{center}

    \flushright{%
    Strategy-proofness: No

    Individual rationality: Yes

    Ex post budget balance: Yes}

    \begin{itemize}
        \item For every type profile, we calculate its optimal deadline.
        \item We run \textsc{CSD} using the optimal deadline.
    \end{itemize}
}}
\vspace{.1in}

\textsc{CSOD} is ex post budget balanced. If we cannot find $k$ agents to pay
$1/k$ each for any $k$, then the optimal deadline is $1$ and the cost sharing
failed. That is, we simply do not reveal the bug (choose not to buy the bug
from the seller).

Unfortunately, we gained some and lost some. Due to changing deadlines, the
mechanism is not strategy-proof.

\begin{example} Let us re-consider the type profile $(0.9, 0.8, 0.26, 0.26)$.
    The optimal deadline for this type profile is $0.625$.  By reporting
    truthfully, agent $2$ receives the bug at time $0$ and pays $0.5$.
    However, she can lower her type to $0.26$ (the optimal deadline is now
slightly below $1$). Agent $2$ still receives the bug at time $0$ but only pays
$0.25$.  \end{example}

Other than not being strategy-proof, under our prior-free settings,
\textsc{CSOD} is optimal in the following senses:

\begin{theorem} Cost sharing with optimal deadline (\textsc{CSOD}) is both
\textsc{Max-Delay-Undominated} and \textsc{Sum-Delay-Undominated}.
\end{theorem}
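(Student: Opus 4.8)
Both assertions follow from the same argument; I describe it for \textsc{Max-Delay-Undominated} and note the change for the sum at the end. The plan is a proof by contradiction. Suppose $A$ is strategy-proof, individually rational, and ex post budget balanced, and \textsc{Max-Delay-Dominates} \textsc{CSOD}: the max delay under $A$ is at most that under \textsc{CSOD} on every profile, and strictly smaller on some profile $v^{0}$. I would first collect three facts I will use repeatedly. (a) By Myerson's characterization $t_i^A(\cdot,v_{-i})$ is non-increasing and $p_i^A(v_i,v_{-i})\le v_i\bigl(1-t_i^A(v_i,v_{-i})\bigr)$; with ex post budget balance this gives $\sum_i v_i\bigl(1-t_i^A(v)\bigr)\ge\sum_i p_i^A(v)=1$ on every profile where the bug is sold. (b) If an agent's reported type is pushed down to a vanishing $\epsilon$, her allocation time is weakly later and her payment is at most $\epsilon$, hence vanishes. (c) Sorting the types as $v_{(1)}\ge\cdots\ge v_{(n)}$, the optimal deadline is $t_C^*=\min\bigl(1,\ \min_k\tfrac1{k\,v_{(k)}}\bigr)$, so any witness $k^{*}$ attaining the inner minimum satisfies $v_{(k)}\le\tfrac1{k\,t_C^*}$ for all $k$ and $v_{(k^{*})}=\tfrac1{k^{*}t_C^*}$ exactly.

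Write $\tau=t_C^*(v^{0})$ for the max delay \textsc{CSOD} incurs on $v^{0}$, and fix a witness $k^{*}$. Since $A$ strictly beats \textsc{CSOD} on $v^{0}$, we have $\tau<1$, the bug is sold under $A$ on $v^{0}$, and $t_i^A(v^{0})<\tau$ for every $i$. From $v^{0}$ I build a canonical tight profile $w$: lower each of the $k^{*}$ largest types to the common value $\tfrac1{k^{*}\tau}$ and lower every other type to a common vanishing value $\epsilon$. Lowering types only shrinks every set $K(\cdot)$, so $t_C^*$ can only increase; but $k^{*}$ still witnesses $\tau$ in $w$, so $t_C^*(w)=\tau$ and \textsc{CSOD}'s max delay on $w$ is exactly $\tau$ (with the $k^{*}$ large agents served at time $0$ and the rest at time $\tau$). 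Domination forces $\max_i t_i^A(w)\le\tau<1$, so the bug is sold under $A$ on $w$ as well; by fact (b) the $n-k^{*}$ low agents pay $o(1)$ in total, so the $k^{*}$ large agents pay $1-o(1)$ between them, and feeding $v_{(i)}=\tfrac1{k^{*}\tau}$ into $p_i^A\le v_i(1-t_i^A)$ constrains their allocation times on $w$ in the $\epsilon\to0$ limit.

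I would then aim to close the argument by comparing $w$ with the profiles obtained by moving one low agent's type up to (or just past) the cluster level: raising one low agent to $\tfrac1{k^{*}\tau}$ makes $k^{*}+1$ a witness for the strictly smaller deadline $\tfrac{k^{*}}{k^{*}+1}\tau$, so domination forces $A$'s max delay below $\tau$ there, and monotonicity of the moved agent's allocation time (fact (a)) links that profile back to $w$ while the other cluster agents' own types are unchanged. Repeating this for each low agent and keeping track of fact (b) produces a system of IR/budget inequalities that one fixed allocation rule must satisfy simultaneously over this whole family; together with Myerson's payment identity forcing the total collected to be \emph{exactly} $1$, the intended conclusion is that the system is infeasible, contradicting the existence of $A$.

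The step I expect to be the real obstacle is exactly this infeasibility. The trouble is that a large-value agent can absorb an arbitrarily large share of the unit cost by being served early and charged far below her value, so no single profile yields a contradiction; one genuinely has to use that $A$ is one fixed mechanism whose allocation rule is pinned down by Myerson over a range of an agent's \emph{own} reports, while monotonicity says nothing about how an agent's allocation moves when the \emph{other} agents' reports change --- so the cross-profile bookkeeping (and the tie cases where there are several witnesses $k^{*}$, and the ``bug not sold'' tie-breaking convention of the footnote) is where the work lies. For \textsc{Sum-Delay} the same canonical profile $w$ serves: \textsc{CSOD}'s sum delay on $w$ equals $(n-k^{*})\tau$, any dominating $A$ must match it on $w$ and beat it somewhere, and the identical budget-versus-Myerson tension, now aggregated over the low agents rather than maximised, gives the contradiction.
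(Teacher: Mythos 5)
Your setup is on the right track --- proof by contradiction, Myerson's characterization, canonical profiles consisting of a cluster of equal high values plus near-zero values, and the recognition that the argument must be cross-profile --- but you explicitly leave the central step (``the infeasibility'') unresolved, and that is precisely where the substance of the proof lies. The paper does not close the argument by exhibiting an infeasible system of budget inequalities on a single family of profiles. Instead it shows that any dominating mechanism $M$ must coincide with \textsc{CSOD} \emph{exactly} on every profile, which contradicts the requirement that domination be strict somewhere. The propagation works as follows: on $(1/k^*,\dots,1/k^*,0,\dots,0)$ the bug must be sold (by the tie-breaking convention), and budget balance plus individual rationality force each of the first $k^*$ agents to pay exactly $1/k^*$ and be served at time $0$. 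Then the values are replaced one at a time by the true $v_1, v_2,\dots,v_{k^*}$, and two-sided strategy-proofness (if the payment dropped, type $1/k^*$ would misreport $v_i$; if it rose, type $v_i$ would misreport $1/k^*$) pins the payment at $1/k^*$ and the allocation at time $0$ throughout, with the residual budget forcing the remaining cluster agents to stay put as well. Your proposal instead lowers the cluster to $1/(k^*\tau)$ and hopes IR alone constrains the allocation times ``in the $\epsilon\to 0$ limit''; that only gives one-sided bounds and, as you yourself observe, does not produce a contradiction.

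The second missing ingredient is the argument that the free riders cannot be served \emph{strictly earlier} than the optimal deadline $\frac{1}{k^* v_{k^*}}$ --- this is what rules out $M$ ever strictly improving on \textsc{CSOD}. The paper's device is to insert an extra agent with type $\frac{k^* v_{k^*}}{k^*+1}$: on that profile $k^*+1$ is a witness for a smaller deadline, so domination and the budget argument force her to pay $\frac{1}{k^*+1}$ and be served at time $0$; if agents reporting $0$ were served at any time strictly before $\frac{1}{k^* v_{k^*}}$, she would strictly profit by reporting $0$, violating strategy-proofness. A similar two-inequality squeeze (truthful vs.\ reporting $0$, and type $\frac{k^* v_{k^*}}{k^*+1}$ vs.\ reporting $v_{k^*+1}$) then forces every agent below the cluster to get exactly the \textsc{CSOD} allocation and zero payment. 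Your sketch gestures at ``moving one low agent's type up to the cluster level'' but does not identify this deviation-to-zero argument, which is the load-bearing step. Finally, you omit entirely the case of profiles on which \textsc{CSOD} does not sell the bug, where the paper uses monotonicity of utility in one's own type together with the already-established behavior at type $1$ to show $M$ must also withhold the bug. As written, the proposal is an honest plan with the correct scaffolding but a genuine gap at its core.
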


\begin{proof}

    We first focus on \textsc{Max-Delay-Undominance}.  Let $M$ be a
    strategy-proof and individually rational mechanism that
    \textsc{Max-Delay-Dominates} \textsc{CSOD}.
    We will prove by contradiction
    that such a mechanism does not exist.

    Let $(v_1,v_2,\dots,v_n)$ be an arbitrary type profile. Without loss of
    generality, we assume $v_1\ge v_2\ge\dots v_n$. We will show that $M$'s
    allocations and payments must be identical to that of \textsc{CSOD} for
    this type profile. That is, $M$ must be identical to \textsc{CSOD}, which
    results in a contradiction.

    We first consider type profiles under which the bug is sold under
    \textsc{CSOD}.  We still denote the type profile under discussion by
    $(v_1,v_2,\dots,v_n)$. Let $k^*$ be the number of agents who participate in
    the cost sharing under \textsc{CSOD}.

    We construct the following type profile:

    \begin{equation}\label{tp:trivial}
        (\underbrace{1/k^*, \dots,1/k^*}_{k^*},0,\dots,0)
    \end{equation}

    For the above type profile, under \textsc{CSOD}, the first $k^*$ agents
    receive the bug at time $0$ and each pays $1/k^*$. By dominance assumption
    (both \textsc{Max-Delay-Dominance} and \textsc{Sum-Delay-Dominance}), under
    $M$, the bug must also be sold. To collect $1$, the first $k^*$ agents must
    each pays $1/k^*$ and must receive the bug at time $0$ due to individual
    rationality.

    Let us then construct a slightly modified type profile:

    \begin{equation}\label{tp:v1}
    (v_1,\underbrace{1/k^*, \dots,1/k^*}_{k^*-1},0,\dots,0)
    \end{equation}

    Since $v_1\ge 1/k^*$, under $M$, agent $1$ must still receive the bug at
    time $0$ due to the monotonicity constraint.  Agent $1$'s payment must
    still be $1/k^*$. If the new payment is lower, then had agent $1$'s true
    type been $1/k^*$, it is beneficial to report $v_1$ instead. If the new
    payment is higher, then agent $1$ benefits by reporting $1/k^*$ instead.
    Agent $2$ to $k^*$ still pay $1/k^*$ and receive the bug at time $0$ due to
    individual rationality.

    We repeat the above reasoning by constructing another slightly modified
    type profile:

    \begin{equation}\label{tp:v2}
    (v_1,v_2,\underbrace{1/k^*, \dots,1/k^*}_{k^*-2},0,\dots,0)
    \end{equation}

    Due to the monotonicity constraint, agent $2$ still pays $1/k^*$ and
    receives the bug at time $0$. Had agent $1$ reported $1/k^*$, he would
    receive the bug at time $0$ and pay $1/k^*$, so due to the monotonicity
    constraint, agent $1$ still pays $1/k^*$ and receives the bug at time $0$
    under type profile~\eqref{tp:v2}. The rest of the agents must be
    responsible for the remaining $(k^*-2)/k^*$, so they still each pays
    $1/k^*$ and receives the bug at time $0$.

    At the end, we can show that under $M$, for the following type profile, the
    first $k^*$ agents each pays $1/k^*$ and must receive the bug at $0$.

    \begin{equation}\label{tp:vk}
        (v_1,v_2,\dots,v_{k^*},0,\dots,0)
    \end{equation}

    For the above type profile~\eqref{tp:vk}, there are $n-k^*$ agents
    reporting $0$s.  For such agents, their payments must be $0$ due to
    individual rationality. Since $M$ \textsc{Max-Delay-Dominates}\footnote{The
    claim remains true if we switch to \textsc{Sum-Delay-Dominance}.}
    \textsc{CSOD}, these agents' allocation time must be at most
    $\frac{1}{k^*v_{k^*}}$, which is their allocation time under \textsc{CSOD}
    (this value is the optimal deadline).  We show that they cannot receive the
    bug strictly earlier than $\frac{1}{k^*v_{k^*}}$ under $M$.

    Let us consider the following type profile:

    \begin{equation}\label{tp:vk1}
        (v_1,v_2,\dots,v_{k^*},\frac{k^*v_{k^*}}{k^*+1},\dots,0)
    \end{equation}

For type profile~\eqref{tp:vk1}, agent $k^*+1$ must receive the bug at time
    $0$ and pay $1/(k^*+1)$.  She can actually benefit by reporting $0$
    instead, if under type profile~\eqref{tp:vk}, agents reporting $0$ receive
    the bug at $\frac{1}{k^*v_{k^*}}^* <\frac{1}{k^*v_{k^*}}$ for free.
   \begin{align*}
       \textnormal{utility for reporting truthfully} &= \frac{k^*v_{k^*}}{k^*+1}-\frac{1}{k^*+1},
       \textnormal{utility for reporting $0$} =\\ \frac{k^*v_{k^*}}{k^*+1}(1-\frac{1}{k^*v_{k^*}}^*) &> \frac{k^*v_{k^*}}{k^*+1}(1-\frac{1}{k^*v_{k^*}}) = \frac{k^*v_{k^*}}{k^*+1}-\frac{1}{k^*+1}
    \end{align*}

    Therefore, for type profile~\eqref{tp:vk}, all agents who report $0$ must
    receive the bug at exactly $\frac{1}{k^*v_{k^*}}$. That is, for type
    profile~\eqref{tp:vk}, $M$ and \textsc{CSOD} are equivalent.

    Now let us construct yet another modified type profile:

    \begin{equation}\label{tp:vk1true}
        (v_1,v_2,\dots,v_{k^*},v_{k^*+1},0,\dots,0)
    \end{equation}

    Here, we must have $v_{k^*+1}<\frac{k^*v_{k^*}}{k^*+1}$. Otherwise, under
    the original type profile, we would have more than $k^*$ agents who join
    the cost sharing. We assume under $M$, agent $k^*+1$ receives the bug at
    time $t$ and pays $p$. $t$ is at most $\frac{1}{k^*v_{k^*}}$ due to the
    monotonicity constraint. We have

    \begin{align*}
     \textnormal{utility when the true type is $v_{k^*+1}$ and reporting truthfully} &=v_{k^*+1}(1-t)-p\\
     \textnormal{utility when the true type is $v_{k^*+1}$ and reporting $0$} &= v_{k^*+1}(1-\frac{1}{k^*v_{k^*}})
    \end{align*}

    Therefore,
    \begin{equation}\label{eq:oneside}
    v_{k^*+1}(1-t)-p\ge v_{k^*+1}(1-\frac{1}{k^*v_{k^*}})
    \end{equation}

    Had agent $k^*+1$'s type been $\frac{k^*v_{k^*}}{k^*+1}$, her utility for reporting
    her true type must be at least her utility when reporting $v_{k^*+1}$. That is,

    \begin{align*}
    \textnormal{utility when the true type is $\frac{k^*v_{k^*}}{k^*+1}$ and reporting truthfully} &= \frac{k^*v_{k^*}}{k^*+1}-\frac{1}{k^*+1} \\
        \textnormal{utility when the true type is $\frac{k^*v_{k^*}}{k^*+1}$ and reporting $v_{k^*+1}$} &= \frac{k^*v_{k^*}}{k^*+1}(1-t) -p
    \end{align*}

    That is,
    \begin{equation}\label{eq:twoside}
    \frac{k^*v_{k^*}}{k^*+1}-\frac{1}{k^*+1} \ge \frac{k^*v_{k^*}}{k^*+1}(1-t) -p
    \end{equation}

    Combine Equation~\eqref{eq:oneside}, Equation~\eqref{eq:twoside},
    $v_{k^*+1}<\frac{k^*v_{k^*}}{k^*+1}$, and $t\le\frac{1}{k^*v_{k^*}}$, we
    have $p=0$ and $t=\frac{1}{k^*v_{k^*}}$.  That is, under type
    profile~\eqref{tp:vk1true}, agent $k^*+1$'s allocation and payment remain
    the same whether she reports $0$ or $v_{k^*+1}$.

    Repeat the above steps, we can show that under the following arbitrary
    profile, agent $k^*+2$ to $n$'s allocation and payment also remain the same
    as when they report $0$.

    \begin{equation}\label{tp:vkn}
        (v_1,v_2,\dots,v_{k^*},v_{k^*+1},v_{k^*+2},\dots,v_n)
    \end{equation}

    That is, for type profiles where the bug is sold under \textsc{CSOD}, $M$
    and \textsc{CSOD} are equivalent.

    We then consider an arbitrary type profile for which the bug is not sold
    under \textsc{CSOD}.  Due to the monotonicity constraint, an agent's
    utility never decreases when her type increases. If any agent $i$ receives
    the bug at time $t$ that is strictly before $1$ and pays $p$, then due to
    the individual rationality constraint, we have that $v_i(1-t)-p\ge 0$.
    $v_i$ must be strictly below $1$, otherwise the bug is sold under
    \textsc{CSOD}.  Had agent $i$'s true type been higher but still below $1$
    (say, to $v_i+\epsilon$), her utility must be positive, because she can
    always report $v_i$ even when her true type is $v_i+\epsilon$.  But earlier
    we proved that had $v_i$'s true type been $1$, she would receive the bug at
    time $0$ and pay $1$. Her utility is $0$ when her type is $1$.  This means
    her utility decreased if we change her true type from $v_i+\epsilon$ to
    $1$, which is a contradiction. That is, all agents must receive the bug at
    time $1$ (and must pay $0$). Therefore, for an arbitrary type profile for
    which the bug is not sold under \textsc{CSOD}, $M$ still behaves the same.

    In the above proof, all places where we reference
    \textsc{Max-Delay-Dominance} can be changed to
    \textsc{Sum-Delay-Dominance}.
\qed%
\end{proof}

\textsc{CSOD} is both \textsc{Max-Delay-Undominated} and
\textsc{Sum-Delay-Undominated}, but it is not strategy-proof.  We now propose
our final mechanism in this section, which builds on \textsc{CSOD}.  The new
mechanism is strategy-proof and its delay is within a constant factor of
\textsc{CSOD}.\footnote{That is, we fixed the strategy-proofness issue at the
cost of longer delays, but it is within a constant factor.}

\vspace{.1in}
\noindent\framebox{\parbox{\textwidth}{%
    \begin{center}
    \textbf{Group-Based Cost Sharing with Optimal Deadline (GCSOD)}
    \end{center}

    \flushright{%
    Strategy-proofness: Yes

    Individual rationality: Yes

    Ex post budget balance: Yes}

    \begin{itemize}
        \item For agent $i$, we flip a fair coin to randomly assign her to either the left group or the right group.
        \item We calculate the optimal deadlines of both groups.
        \item We run \textsc{CSD} on both groups.
        \item The left group uses the optimal deadline from the right group and vice versa.
    \end{itemize}
}}
\vspace{.1in}

\begin{claim} Group-based cost sharing with optimal deadline (\textsc{GCSOD})
is strategy-proof, individually rational, and ex post budget balanced.
\end{claim}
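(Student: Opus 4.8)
The plan is to verify the three properties one at a time, exploiting the fact that \textsc{CSD} run with a \emph{fixed} deadline is already strategy-proof and individually rational, and that \textsc{GCSOD} is nothing more than two such runs of \textsc{CSD} stitched together through the cross-group deadlines. Individual rationality is immediate: in whichever group agent $i$ lands, her group's run of \textsc{CSD} uses some deadline $t_C$; a non-participating agent pays $0$ and is allocated at time $t_C$ for utility $(1-t_C)v_i\ge 0$, while a participating agent satisfies $v_i\ge \tfrac{1}{k^*t_C}$ by the definition of the set $K$, so the move from allocation time $t_C$ to time $0$ is worth $t_Cv_i\ge \tfrac{1}{k^*}$ to her, at least the $\tfrac{1}{k^*}$ she pays. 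Hence every agent's (expected) utility is non-negative.

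For strategy-proofness, the key observation is that agent $i$'s report affects the outcome only through (i) the \textsc{CSD} run on her own group and (ii) the optimal deadline computed from her own group, which is then passed to the \emph{other} group; and (ii) never touches $i$'s own allocation time or payment. So, conditioning on the coin flips and on $v_{-i}$, agent $i$'s utility depends on her report solely through the \textsc{CSD} run on her group, and that run's deadline is the optimal deadline of the other group, a quantity entirely independent of $i$'s report. Agent $i$ therefore faces exactly ``\textsc{CSD} with a fixed deadline'', which is strategy-proof; averaging over the coin flips preserves this, so \textsc{GCSOD} is strategy-proof.

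Ex post budget balance is the subtle one, because \textsc{CSD} by itself is \emph{not} budget balanced --- it fails exactly when its cost-sharing fails while its deadline is strictly below $1$, giving a free reveal with zero revenue. Let $\tau_L,\tau_R$ denote the optimal deadlines of the two groups, so the left group runs \textsc{CSD} with deadline $\tau_R$ and the right with deadline $\tau_L$. A group's run collects exactly $1$ iff its set $K$ at the supplied deadline is non-empty, and since $K$ is monotone in the deadline with threshold equal to that group's \emph{own} optimal deadline, the left group collects $1$ iff $\tau_R\ge\tau_L$ and the right group collects $1$ iff $\tau_L\ge\tau_R$. I would then argue by cases: when $\tau_L\ne\tau_R$, exactly one group funds the (single) bug for a total payment of $1$ while the other group's members are pure free riders receiving the funded bug at their deadline for $0$, so $\sum_ip_i=1$ and the bug is sold to everyone; when $\tau_L=\tau_R=1$ --- the only way both runs can fail --- no cost-sharing is possible in either group, the bug is not sold, and every agent has allocation time $1$ and payment $0$.

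I expect the real obstacle to be precisely this budget-balance argument, and inside it the boundary case $\tau_L=\tau_R<1$, where both groups' \textsc{CSD} runs would succeed and a naive reading collects a total of $2$ for a cost-$1$ bug: one has to specify how \textsc{GCSOD} resolves this (e.g.\ by designating one group's collection as the one that actually pays for the bug, or by refunding) and then double-check that the tie-handling does not open a profitable deviation --- since whether a group's collection ``counts'' can hinge on that group's optimal deadline, which \emph{does} move with its members' reports. Everything else is routine bookkeeping layered on top of the already-established properties of \textsc{CSD}.
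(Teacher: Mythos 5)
Your proposal takes essentially the same route as the paper: strategy-proofness and individual rationality are inherited because each agent effectively faces a \textsc{CSD} whose deadline is computed from the \emph{other} group and is therefore independent of her own report, and budget balance follows from the case analysis on the two optimal deadlines $D_L$ versus $D_R$ (exactly one group's cost sharing succeeds when they differ; both fail only when $D_L=D_R=1$). Your individual-rationality argument is in fact spelled out more fully than the paper's one-line ``every agent participates in a \textsc{CSD}.''

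The one piece you deliberately left open---the knife-edge $D_L=D_R<1$---is resolved in the paper by the single sentence ``we simply tie-break in favour of the left group,'' with no further verification, and your suspicion that this needs checking is well founded. At such a tie, the disfavoured group's \textsc{CSD} (run with deadline $D_L=D_R$) would nominally succeed, and a right-group agent with $v_i$ strictly above the marginal threshold $\frac{1}{k^*D_R}$ strictly prefers success: $v_i-1/k^* > (1-D_R)v_i$. By over-reporting she can push her group's optimal deadline strictly below $D_L$, the tie disappears, her group's \textsc{CSD} then genuinely succeeds, and she obtains allocation at time $0$ for a payment of at most $1/k^*$---a strict gain over the forced free-ride at $D_R$. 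So the tie-break as stated does admit a profitable misreport on a (measure-zero) set of profiles, and this gap sits in the paper's own proof just as much as in yours; a fully rigorous treatment needs a tie-breaking rule whose effect on an agent's own outcome cannot be manipulated through her group's optimal deadline. Away from that boundary case, your argument is complete and matches the paper's.
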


\begin{proof}
    Every agent participates in a \textsc{CSD} so strategy-proofness and
    individual rationality hold. Let $D_L$ and $D_R$ be the optimal deadlines
    of the left and right groups, respectively. If $D_L < D_R$, then the left
    group will definitely succeed in the cost sharing, because its optimal
    deadline is $D_L$ and now they face an extended deadline. The right group
    will definitely fail in the cost sharing, as they face a deadline that is
    earlier than the optimal one.  At the end, some agents in the left group
    pay and receive the bug at $0$, and the remaining agents in the left group
    receive the bug at time $D_R$ for free.  All agents in the right group
    receive the bug at time $D_L$ for free.  If $D_L > D_R$, the reasoning is
    the same. If $D_L=D_R < 1$, then we simply tie-break in favour of the left
    group. If $D_L=D_R=1$, then potentially both groups fail in the cost
    sharing, in which case, we simply do not reveal the bug (do not buy it from
    the seller).
\qed%
\end{proof}

\begin{definition} Mechanism $A$ is \textsc{$\alpha$-Max-Delay-Competitive}
    against mechanism $B$ if for every agent $i$, every type profile, we have
    that the max delay under $A$ is at most $\alpha$ times the max delay under
    $B$.

    \textsc{$\alpha$-Sum-Delay-Competitive} is defined similarly.
\end{definition}

\begin{theorem}\label{th:4}
    \textsc{GCSOD} is \textsc{$4$-Max-Delay-Competitive} against \textsc{CSOD} under
    two technical assumptions:
    \begin{itemize}
        \item No agent's valuation for the bug exceeds the whole cost. That is, $v_i\le 1$ for all $i$.
        \item At least one agent does not participate in the cost sharing under \textsc{CSOD}.
    \end{itemize}
\end{theorem}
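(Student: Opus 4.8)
The plan is to show that for \emph{every} agent $i$ the expected allocation time $t_i^{\textsc{GCSOD}}$ is at most $4t_C^*$, where $t_C^*$ denotes the optimal deadline of the whole reported profile; since \textsc{CSOD} is deterministic and, under the second assumption, has \textsc{Max-Delay} exactly $t_C^*$ whenever the bug is sold, this gives $4$-competitiveness (and the per-agent bound is slightly stronger than what the definition literally asks). First I would dispose of the degenerate case $t_C^*=1$: then \textsc{Max-Delay} under \textsc{CSOD} is $1$ (either the bug is not sold, or by the second assumption some agent free rides at time $1$), and every allocation time is at most $1\le 4t_C^*$. So assume $t_C^*<1$. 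Then exactly $k^*$ agents cost share under \textsc{CSOD}, they are the $k^*$ highest reporters, and the $k^*$-th highest report equals $\frac{1}{k^*t_C^*}$; since every report is at most $1$ (first assumption), $\frac{1}{k^*t_C^*}\le 1$, i.e.\ $t_C^*\ge\frac1{k^*}$, which forces $k^*\ge 2$. Call these $k^*$ agents \emph{good}; each good agent has value at least $\frac1{k^*t_C^*}$.

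Two structural facts drive the argument. (i) In any outcome of the coin flips with group-wise optimal deadlines $D_L,D_R$, each agent's allocation time is at most the optimal deadline of the \emph{other} group: the group facing the larger deadline succeeds and its members receive the bug at time $0$ or at that larger deadline, while the other group fails and all its members receive the bug at that larger deadline too. (ii) If a set of agents contains $p\ge 1$ good agents, its optimal deadline is at most $\frac{k^*t_C^*}{p}$, since taking cost-sharing size $p$, the $p$-th largest value in the set is at least $\frac1{k^*t_C^*}$. For a \emph{good} agent $i$ these facts already give the pointwise bound $t_i\le 2t_C^*$ in every outcome: if $i$ is in the group holding $m$ other good agents, then the other group holds $k^*-1-m$ good agents, so by (i)--(ii) $t_i\le\frac{k^*t_C^*}{k^*-1-m}$ when $k^*-1-m\ge1$; and if the other group's deadline is at least $\frac{k^*t_C^*}{m+1}$, then the $m+1$ good agents in $i$'s group (including $i$) meet the cost-sharing condition at that deadline, so $i$'s group succeeds with at least $m+1$ cost sharers, and $i$ is among them because only strictly-higher-value agents can outrank $i$, all such agents are good (their overall rank is below $i$'s, which is at most $k^*$), and they lie in $i$'s group, so $i$'s within-group rank is at most $m+1$. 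Hence $t_i\le\frac{k^*t_C^*}{m+1}$ always, and combining, $t_i\le\frac{k^*t_C^*}{\max(m+1,\,k^*-1-m)}\le\frac{k^*t_C^*}{k^*/2}=2t_C^*$.

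For a \emph{bad} agent $i$ (value below $\frac1{k^*t_C^*}$) the pointwise bound fails — an unlucky split can place every good agent in $i$'s group, leaving the other group's deadline as large as $1$ — so here I would average over the partition. Conditioning on $i$ being in a given group, the number $m$ of good agents landing with $i$ is $\mathrm{Bin}(k^*,\tfrac12)$, the other group holds $k^*-m$ good agents, and by (i)--(ii) $t_i\le\frac{k^*t_C^*}{k^*-m}$ for $m\le k^*-1$ and $t_i\le1$ for $m=k^*$. Summing (the two conditionings are symmetric) and using the routine estimate $\sum_{\ell=1}^{k}\binom{k}{\ell}/\ell\le 2^{k+2}/(k+1)$ (which follows from $\binom{k}{\ell}/\ell\le\frac{2}{k+1}\binom{k+1}{\ell+1}$),
\[ t_i^{\textsc{GCSOD}}\ \le\ k^*t_C^*\,2^{-k^*}\sum_{\ell=1}^{k^*}\frac{\binom{k^*}{\ell}}{\ell}\ +\ 2^{-k^*}\ \le\ \frac{4k^*t_C^*}{k^*+1}+2^{-k^*}. \]
Finally $\frac{4k^*t_C^*}{k^*+1}+2^{-k^*}\le 4t_C^*$ is equivalent to $2^{-k^*}(k^*+1)\le 4t_C^*$, which holds since $t_C^*\ge\frac1{k^*}$ and $k^*(k^*+1)\le 2^{k^*+2}$ for $k^*\ge 2$. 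Combining the two cases, $\max_i t_i^{\textsc{GCSOD}}\le 4t_C^*=4\cdot(\textsc{Max-Delay}\text{ under }\textsc{CSOD})$.

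I expect the bad-agent case to be the main obstacle. For good agents even the worst partition is fine, but a bad agent can be delayed all the way to time $1$ on an unfavourable split, so its bound has to be squeezed out of the concentration of the $\mathrm{Bin}(k^*,\tfrac12)$ split of the good agents; and the constant comes out as exactly $4$ only because the residual $2^{-k^*}$ tail term is absorbed using the slack $t_C^*\ge 1/k^*$ that the first assumption provides. Fitting these pieces together — the reduction via fact (i), the counting bound (ii), the binomial sum, and the exponential-versus-polynomial comparison — with the right constants is the delicate part.
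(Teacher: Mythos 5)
Your proof is correct and lands on the same constant, but the way you extract the factor $4$ is genuinely different from the paper's, and arguably cleaner. The shared skeleton is identical: after the random split, a group containing $p\ge 1$ of the $k^*$ cost sharers has optimal deadline at most $\frac{k^*}{p}D^*$ (your fact (ii)), every agent waits at most the \emph{other} group's deadline (your fact (i)), and the first assumption supplies $D^*\ge 1/k^*$ to absorb the case $p=0$. From there the paper bounds the realized max delay by $\frac{k^*}{\max\{1,\min\{k_L,k^*-k_L\}\}}D^*$ and estimates the expectation of this \emph{two-sided} minimum over the binomial split; that sum is awkward enough that the paper only pushes the bound through analytically for even $k^*\ge 50$, omits the odd case, and checks $k^*<50$ numerically. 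You instead bound each agent separately: for a cost-sharing (``good'') agent you note she either joins her own group's cost share (time $0$) or is covered by the smaller of $\frac{k^*D^*}{m+1}$ and $\frac{k^*D^*}{k^*-1-m}$, giving a \emph{deterministic} $2D^*$ bound in every realization; for a free rider only the one-sided sum $\sum_{\ell\ge 1}\binom{k^*}{\ell}/\ell$ appears, which your inequality $\binom{k^*}{\ell}/\ell\le\frac{2}{k^*+1}\binom{k^*+1}{\ell+1}$ bounds by $2^{k^*+2}/(k^*+1)$ in closed form for every $k^*\ge 2$, with the residual $2^{-k^*}$ tail absorbed using $D^*\ge 1/k^*$. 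The result is a uniform argument with no parity case split and no numerical verification. Your identification of good agents with ``value at least $\frac{1}{k^*t_C^*}$'' is safe from tie-breaking worries because the maximality of $k^*$ in $K(t_C^*)$ forces exactly $k^*$ agents to clear that threshold. One further (harmless) difference: you control $\max_i E[t_i]$, which is precisely the quantity the paper's footnoted definition of allocation time for randomized mechanisms calls for, whereas the paper's computation controls the larger quantity $E[\max_i t_i]$; both establish the theorem as stated.
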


There's no way to ensure that the first assumption always holds, but it can be
argued that it at least holds in the scenarios of cost sharing serious bugs
beyond any individual's purchasing power. The second assumption is needed only
because in the single case of everyone joining the cost sharing under
\textsc{CSOD}, the max delay is 0. While under \textsc{GCSOD}, the max delay is
always greater than 0 so it would not be competitive in this case only. And for
the other assumption, as our system would welcome as many agents as possible,
it is expected that there are always agents who don't value a new bug very much
so that they would prefer to be free riders instead of participating in the
cost sharing under \textsc{CSOD}.

\begin{proof} Let us consider an arbitrary type profile that satisfies both
    technical assumptions. We denote it by $(v_1,v_2,\dots,v_n)$.  Without loss
    of generality, we assume $v_1\ge v_2\ge\dots\ge v_n$.  Let $k^*$ be the
    number of agents who join the cost sharing under \textsc{CSOD}.  The
    optimal deadline under \textsc{CSOD} is then $D^*=\frac{1}{k^*v_{k^*}}$,
    which is exactly the max delay for this type profile.

    Under a specific random grouping, for the set of agents from $1$ to $k^*$,
    we assume $k_L$ agents are assigned to the left group and $k_R=k^*-k_L$
    agents are assigned to the right group.

    For the left group, the optimal deadline is at most
    $\frac{1}{k_{L}v_{k^*}}$ if $k_L\ge 1$, which is at most
    $\frac{k^*}{k_{L}}D^*$.  When $k_L=0$, the optimal deadline is at most $1$.
    Under \textsc{CSOD}, since all types are at most $1$, the optimal deadline
    $D^*$ is at least $1/k^*$.  That is, if $k_L=0$, the optimal deadline of
    the left group is at most $k^*D^*$.

    In summary, the optimal deadline of the left group is at most
    $\frac{k^*}{k_L}D^*$ if $k_L\ge 1$ and $k^*D^*$ if $k_L=0$.  That is, the
    optimal deadline of the left group is at most $\frac{k^*}{\max\{1,
    k_L\}}D^*$

    Similarly, the optimal deadline of the right group is at most
    $\frac{k^*}{\max\{1, k_R\}}D^*$

    The max delay under \textsc{GCSOD} is at most the worse of these two
    deadlines.  The ratio between the max delay under \textsc{GCSOD} and the
    max delay under \textsc{CSOD} is then at most $\frac{k^*}{\max\{1,
    \min\{k_L, k^*-k_L\}\}}$.

    We use $\alpha(k)$ to denote the expected ratio (expectation with regard to
    the random groupings):

    \begin{equation}\label{eq:alphak}
    \alpha(k)=\sum_{k_L=0}^{k}\frac{1}{2^{k}}{k\choose k_L}\frac{k}{\max\{1, \min\{k_L, k-k_L\}\}}
    \end{equation}

    We define $\beta(k)=\alpha(k)2^k$.
    \begin{equation*}
            \beta(k)=\sum_{k_L=0}^{k}{k\choose k_L}\frac{k}{\max\{1, \min\{k_L, k-k_L\}\}}
            =\sum_{k_L=1}^{k-1}{k\choose k_L}\frac{k}{\min\{k_L, k-k_L\}} + 2k
    \end{equation*}
    If $k$ is even and at least $50$, then
        \begin{align*}
            \beta(k)
            &=
            \sum_{k_L=1}^{k/2-1}{k\choose k_L}\frac{k}{\min\{k_L, k-k_L\}}
            +\sum_{k_L=k/2+1}^{k-1}{k\choose k_L}\frac{k}{\min\{k_L, k-k_L\}}
            + 2{k \choose k/2} + 2k \\
            &=
            2\sum_{k_L=1}^{k/2-1}{k\choose k_L}\frac{k}{k_L}
            + 2{k \choose k/2}+ 2k \\
        \end{align*}
        \begin{align*}
            \beta(k)
            &=
            2\sum_{k_L=1}^{k/2-1}{k+1\choose k_L+1}\frac{(k_L+1)k}{(k+1)k_L}
            + 2{k \choose k/2}+ 2k \\
            &\le
            4\sum_{k_L=1}^{k/2-3}{k+1\choose k_L+1} +
            2{k+1\choose k/2-1}\frac{(k/2-1)k}{(k+1)(k/2-2)}\\
            &+
            2{k+1\choose k/2}\frac{(k/2)k}{(k+1)(k/2-1)} +
            2{k+1 \choose k/2}+ 2k \\
            &\le
            4\sum_{k_L=1}^{k/2-3}{k+1\choose k_L+1} +
            2.1{k+1\choose k/2-1} + 4.1{k+1\choose k/2}+ 2k
        \end{align*}
        The ratio between ${k+1\choose k/2}$ and ${k+1\choose k/2-1}$ is at
        most $1.08$ when $k$ is at least $50$.
        \begin{equation*}
\beta(k) \le 4\sum_{k_L=1}^{k/2-3}{k+1\choose k_L+1} + 4{k+1\choose k/2-1} + 4{k+1\choose k/2}+ 2k \le 4\sum_{k_L=0}^{k/2-1}{k+1\choose k_L+1} \le 4\times 2^k
        \end{equation*}




We omit the similar proof when $k$ is odd. In summary, we have $\alpha(k)\le 4$ when $k\ge 50$. For smaller $k$, we
numerically calculated $\alpha(k)$. All values are below $4$.
\qed%
\end{proof}

\begin{corollary}
    \textsc{GCSOD} is \textsc{$8$-Sum-Delay-Competitive} against \textsc{CSOD} under two
    technical assumptions:
    \begin{itemize}
        \item No agent's valuation for the bug exceeds the whole cost. That is, $v_i\le 1$ for all $i$.
        \item At least half of the agents do not participate in the cost sharing under \textsc{CSOD}.
    \end{itemize}
\end{corollary}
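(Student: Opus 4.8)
The plan is to reduce the corollary to the max-delay analysis already carried out in the proof of Theorem~\ref{th:4}, using the crude bound that \textsc{Sum-Delay} never exceeds $n$ times \textsc{Max-Delay}, and to note that the hypothesis ``at least half of the agents do not participate'' is precisely what keeps this crude bound from costing more than one extra factor of two.

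First I would handle the degenerate case where the bug is not sold under \textsc{CSOD}: then every agent has delay $1$ under \textsc{CSOD}, while no agent has delay exceeding $1$ under \textsc{GCSOD}, so the ratio is at most $1$. So assume the bug is sold, fix a type profile with $v_1\ge v_2\ge\dots\ge v_n$, let $k^*\ge 1$ be the number of participating agents, and let $D^*=\tfrac{1}{k^*v_{k^*}}$ be the optimal deadline. Under \textsc{CSOD} the $k^*$ participants have delay $0$ and each of the remaining $n-k^*$ agents has delay $D^*$, so \textsc{Sum-Delay} under \textsc{CSOD} equals $(n-k^*)D^*$. By the second technical assumption $n-k^*\ge n/2$, hence \textsc{Sum-Delay} under \textsc{CSOD} is at least $\tfrac{n}{2}D^*$; in particular $k^*\le n/2$, so at least one agent does not participate, and since $v_i\le 1$ holds as well, all the hypotheses underlying the argument of Theorem~\ref{th:4} are met for this type profile.

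Next I would bound \textsc{GCSOD}. For every realization of the random grouping, the sum of the $n$ agent delays is at most $n$ times the largest one. Since the allocation time of agent $i$ under \textsc{GCSOD} is her expected delay, linearity of expectation gives that \textsc{Sum-Delay} under \textsc{GCSOD} equals the expected value of that per-realization sum, which is therefore at most $n$ times the expected per-realization maximum delay. The proof of Theorem~\ref{th:4} shows exactly that this expected per-realization maximum delay is at most $\alpha(k^*)D^*$, and that $\alpha(k^*)\le 4$ (this is where $v_i\le 1$ is used, through the bound $D^*\ge 1/k^*$ in the $k_L=0$ subcase). Hence \textsc{Sum-Delay} under \textsc{GCSOD} is at most $4nD^*$, and dividing by the lower bound on \textsc{Sum-Delay} under \textsc{CSOD} gives a ratio of at most $\dfrac{4nD^*}{(n/2)D^*}=8$.

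The one point that needs care---and the main obstacle---is checking that the proof of Theorem~\ref{th:4} really controls the \emph{expectation of the per-grouping maximum delay}, not merely the maximum over agents of the expected delay. It does: that proof takes the expectation over the random groupings of the per-grouping ratio $\tfrac{k^*}{\max\{1,\min\{k_L,k^*-k_L\}\}}$, which is exactly the \textsc{Max-Delay} of that grouping divided by $D^*$, so its bound transfers without change; and the only inequality relating the two quantities, namely that the expected maximum is at least the maximum of the expectations, points in the harmless direction. Everything else is bookkeeping: the factor $n$ lost in bounding \textsc{Sum-Delay} by $n$ times \textsc{Max-Delay} is paid for by $n-k^*\ge n/2$, which is exactly why the competitive constant doubles from $4$ to $8$ and why the corollary must require ``at least half'' the agents to abstain rather than merely ``at least one''.
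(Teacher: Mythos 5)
Your proof is correct and follows essentially the same route as the paper's: bound the \textsc{GCSOD} \textsc{Sum-Delay} by $n$ times the $4D^*$ max-delay bound from Theorem~\ref{th:4}, and compare against the \textsc{CSOD} \textsc{Sum-Delay} of $(n-k^*)D^*\ge (n/2)D^*$. Your extra care in distinguishing the expected per-grouping maximum from the maximum of expected delays, and in checking the degenerate unsold case, only makes the argument more precise than the paper's brief version.
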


\begin{proof}
    Let $D^*$ and $k^*$ be the optimal deadline and the number of agents who
    join the cost sharing under \textsc{CSOD}.  The \textsc{Sum-Delay} of the
    agents under \textsc{CSOD} is $(n-k^*)D^*$.  Under \textsc{GCSOD}, the
    deadlines are at most $4D^*$ according to Theorem~\ref{th:4}.  The
    \textsc{Sum-Delay} is then at most $4D^*n$.  Therefore, the competitive
    ratio is $\frac{4n}{n-k^*}$, which is at least $8$ if $k^*\le n/2$.
\qed%
\end{proof}

\section{Settings with Prior Distributions}\label{sec:withprior}

In this section, we assume that there is a publicly known prior distribution
over the agents' types.  Specifically, we assume that every agent's type is
drawn from an identical and independent distribution, whose support is $[0,U]$.
We still enforce the same set of mechanism constraints as before, namely,
strategy-proofness, individually rationality, and ex post budget balance. Our
aim is to minimize the \emph{expected} \textsc{Max-Delay} or the
\emph{expected} \textsc{Sum-Delay}.  Our main results are two linear programs
for computing the lower bounds of \emph{expected} \textsc{Max-Delay} and
\emph{expected} \textsc{Sum-Delay}.  We then compare the performance of
\textsc{CS} and \textsc{GCSOD} against these lower bounds.

The key idea to obtain the lower bounds is to relax the ex post budget balance
constraint to the following:

\begin{itemize}

    \item With probability $\bbbc$, the bug is not sold under the optimal
        mechanism. $\bbbc$ depends on both the mechanism and the distribution.

    \item Every agent's expected payment is then $(1-\bbbc)/n$, as the agents'
        distributions are symmetric.\footnote{It is without loss of generality
        to assume that the optimal mechanism does not treat the agents
        differently based on their identities. Given a non-anonymous mechanism,
        we can trivially create an ``average'' version of it over all permutations
        of the identities~\cite{Guo2013:Undominated}.  The resulting mechanism
        is anonymous and has the same \textsc{Max-Delay} and
        \textsc{Sum-Delay}.}

    \item Every agent's expected allocation time is at least $\bbbc$, as the
        allocation time is $1$ with probability $\bbbc$.

\end{itemize}

We divide the support of the type distribution $[0,D]$ into $H$ equal segments.
We use $\delta$ to denote $D/H$. The $i$-th segment is then $[(i-1)\delta,
i\delta]$.  Noting that the agents' distributions are symmetric, we do not need
to differentiate the agents when we define the following notation.  We use
$t_i$ to denote an agent's expected allocation time when her type is $i\delta$.
That is, $t_0$ is an agent's expected allocation time when her type is $0$, and
$t_H$ is her expected allocation time when her type is $D$. Similarly, we use
$p_i$ to denote an agent's expected payment when her type is $i\delta$. The
$t_i$ and the $p_i$ are the variables in our linear programming models.

Due to Myerson's characterization, the $t_H$ must be non-increasing. That is,

\[
1\ge t_0 \ge t_1\ge \dots\ge t_H \ge 0
\]

We recall that strategy-proofness and individual rationality together imply
that the agents' payments are completely characterized by the allocation times.
Using notation from Section~\ref{sec:model}, we have

\[p_i^M(v_i, v_{-i}) = v_i(1-t_i^M(v_i,v_{-i}))-\int_{z=0}^{v_i}(1-t_i^M(z,
v_{-i}))\,\mathrm{d}z \]

Using notation from this section, that is

\[
i\delta(1-t_i)-\sum_{z=1}^{i}(1-t_z)\delta \le
p_i \le i\delta(1-t_i)-\sum_{z=0}^{i-1}(1-t_z)\delta
\]

$\bbbc$ is another variable in our linear programming model.  We use $\bbbp(i)$
to denote the probability that an agent's type falls inside the $i$-th interval
$[(i-1)\delta, i\delta]$.  Since every agent's expected payment is
$(1-\bbbc)/n$, we have

\[
    \sum_{z=1}^{H}\bbbp(z)p_{z-1} \le (1-\bbbc)/n \le \sum_{z=1}^{H}\bbbp(z)p_z
\]

Since an agent's expected allocation time is at least $\bbbc$, we have

\[
    \sum_{z=1}^{H}\bbbp(z)t_{z-1} \ge \bbbc
\]

The expected \textsc{Sum-Delay} is at least $\sum_{z=1}^{H}\bbbp(z)t_z$.  We
minimize it to compute a lower bound for the expected \textsc{Sum-Delay}.

To compute a lower bound on the expected \textsc{Max-Delay}, we introduce a few more notations:

\begin{itemize}
    \item Let $A(i)$ be the expected \textsc{Max-Delay} when all agents report higher than $i\delta$.
    \item Let $P^A(i)$ be the probability that all agents report higher than $i\delta$.
    \item Let $B(i)$ be the expected \textsc{Max-Delay} when at least one agent reports at most $i\delta$.
    \item Let $P^B(i)$ be the probability that at least one agent reports at most $i\delta$.
    \item Let $C(i)$ be an agent's expected delay when she reports at most $i\delta$.
\end{itemize}

The expected \textsc{Max-Delay} is at least the following for any $i$:

\[
    A(i)\times P^A(i) + B(i)\times P^B(i)\ge B(i)\times P^B(i)\ge C(i)\times P^B(i)
\]

We minimize~\eqref{obj:max} to compute a lower bound on the expected \textsc{Max-Delay}.

\begin{equation}
    C(i)\times P^B(i) \ge \frac{\sum_{z=1}^{i}t_z\bbbp(z)}{\sum_{z=1}^{i}\bbbp(z)} \times \left\{1-{\left(\sum_{z=i+1}^H\bbbp(z)\right)}^n\right\}\label{obj:max}
\end{equation}

We present the expected delays of \textsc{CS} and \textsc{GCSOD} under different distributions.
$U(0,1)$ refers to the case where every agent's valuation is drawn from the
uniform distribution from $0$ to $1$. $N(0.5, 0.2)$ refers to the case where
every agent's valuation is drawn from the normal distribution with mean $0.5$
and standard devastation $0.2$, conditional on that the value is between $0$
and $1$.

\begin{center}
\begin{tabular}{| c | c | c | c | c | c | c |}
    \hline
    & \multicolumn{3}{c|}{\textsc{Max-Delay}}  & \multicolumn{3}{c|}{\textsc{Sum-Delay}}\\
    \hline
    & \textsc{GCSOD} & \textsc{CS} & Lower Bound & \textsc{GCSOD} & \textsc{CS} & Lower Bound \\
    \hline
    $U(0,1)$, $n=1$ & $1.00$ & $1.00$ & $0.89$ & $1.00$ & $1.00$ & $0.89$ \\
    \hline
    $U(0,1)$, $n=2$ & $0.87$ & $0.75$ & $0.67$ & $1.75$ & $1.50$ & $0.96$ \\
    \hline
    $U(0,1)$, $n=5$ & $0.85$ & $0.67$ & $0.46$ & $2.67$ & $1.41$ & $0.94$ \\
    \hline
    $U(0,1)$, $n=10$ & $0.68$ & $0.65$ & $0.29$ & $3.01$ & $1.13$ & $0.89$ \\
    \hline
    $N(0.5,0.2)$, $n=1$ & $1.00$ & $1.00$ & $0.97$ & $1.00$ & $1.00$ & $0.97$ \\
    \hline
    $N(0.5,0.2)$, $n=2$ & $0.87$ & $0.75$ & $0.63$ & $1.75$ & $1.50$ & $0.89$ \\
    \hline
    $N(0.5,0.2)$, $n=5$ & $0.79$ & $0.27$ & $0.20$ & $2.13$ & $0.40$ & $0.27$ \\
    \hline
    $N(0.5,0.2)$, $n=10$ & $0.54$ & $0.15$ & $0.11$ & $2.20$ & $0.17$ & $0.14$ \\
    \hline
    $N(0.5,0.4)$, $n=1$ & $0.95$ & $0.95$ & $0.92$ & $0.95$ & $0.95$ & $0.92$ \\
    \hline
    $N(0.5,0.4)$, $n=2$ & $0.88$ & $0.76$ & $0.66$ & $1.73$ & $1.48$ & $0.94$ \\
    \hline
    $N(0.5,0.4)$, $n=5$ & $0.84$ & $0.57$ & $0.40$ & $2.54$ & $1.09$ & $0.71$ \\
    \hline
    $N(0.5,0.4)$, $n=10$ & $0.65$ & $0.50$ & $0.26$ & $2.76$ & $0.74$ & $0.59$ \\
    \hline
\end{tabular}
\end{center}

\textsc{CS} outperforms \textsc{GCSOD} in terms of both \textsc{Max-Delay} and
\textsc{Sum-Delay}.  This is not too surprising because \textsc{GCSOD} is
designed for its competitive ratio in the worst case. Our derived lower bounds
show that \textsc{CS} is fairly close to optimality in a lot of cases.

\section{Conclusions and Future Work}
We have come up with a mechanism with competitive ratios of 4 for max delay and 8 for sum delay under certain assumptions. As the problem setting is rather new, there are plenty of options to be explored when designing mechanisms with better performance. Possible solutions showing promise include, for exmaple, another method we considered but did not dedicate as much time into---scheduling fixed prices for different sections of time periods, regardless of the agents' submitted valuations. But such a mechanism will require extensive simulations and analyses to evaluate its performance. It should also be noted that the lack of data for such simulations is to be addressed.

While most of our result is presented under prior-free settings, we made a certain number of assumptions, some of which easily hold true for realistic applications---and therefore rather trivial---some of which less so. For example, there is an assumption that there is at least one agent not participating in the cost sharing in the benchmark function CSOD. This is necessary because we cannot evaluate any mechanism's resulting time against 0 and produce a valid competitive ratio, while this can also be easily satisfied by including free-riders who are determined not to contribute at all. But for the assumption that no agent's valuation exceeds the total required amount, although it is introduced because of similar reasons, we cannot expect it to hold true for every case. So either removing existing constraints to generalize the solution or adding more assumptions to yield better results would be reasonable as immediate future work.
\bibliographystyle{splncs04}

\end{document}